\documentclass[11pt, draftclsnofoot, onecolumn]{IEEEtran}

\usepackage{graphicx}
\usepackage[usenames,dvipsnames]{xcolor}

\usepackage{amsmath,amssymb,amsfonts,graphicx,bbold,amsthm,mathtools}
\usepackage{enumitem}
\usepackage{subfigure,epstopdf,algorithm}
\usepackage{array}
\usepackage{xcolor}
\usepackage{thmtools}

\declaretheorem{proposition}
\usepackage{times}
\usepackage{eqparbox}
\hyphenation{op-tical net-works semi-conduc-tor}
\newcommand{\tr}{\mbox{tr}}
\newcommand{\vecc}{\mbox{vec}}

\usepackage[noend]{algpseudocode}

\makeatletter
\def\BState{\State\hskip-\ALG@thistlm}
\makeatother


\title{Supervised Linear Regression for Graph Learning from Graph Signals}

\author{Arun~Venkitaraman$^*$, Hermina Petric Maretic$^\dagger$, Saikat Chatterjee$^*$, and Pascal Frossard$^\dagger$\\
	$^*$ Department of Information Science and Engineering,\\
	School of Electrical Engineering and Computer Science,     
	KTH Royal Institute of Technology,  Sweden                 \\
	$^\dagger$ Signal Processing Laboratory (LTS4), Ecole Polytechnique F{\'e}d{\'e}rale de Lausanne (EPFL)\\
	arunv@kth.se,	hermina.petric.maretic@epfl.ch, sach@kth.se, pascal.frossard@epfl.ch
}

\begin{document}

	\maketitle
	\begin{abstract}
		We propose a supervised learning approach for predicting an underlying graph from a set of graph signals. Our approach is based on linear regression. In the linear regression model, we predict edge-weights of a graph as the output, given a set of signal values on nodes of the graph as the input. We solve for the optimal regression coefficients using a relevant optimization problem that is convex and uses a graph-Laplacian based regularization. The regularization helps to promote a specific graph spectral profile of the graph signals. Simulation experiments demonstrate that our approach predicts well even in presence of outliers in input data.
	\end{abstract}
	

	\section{Introduction}
	Graph learning in the context of graph signal processing refers to the problem of learning associations between different nodes/agents of a graph or network. A network structure is inferred from the given signal values at the different nodes. 
	Graph learning is part of many analysis and processing tasks, such as clustering, community detection, prediction of signal values, or for predicting entire graph signals. Various models have been proposed to infer a graph from a set of signals \cite{dong2018learning}. Most notable works include graph inference from smooth signals \cite{Dong14} \cite{kalofolias2016learn} \cite{egilmez2017graph}, based on the assumption that signals vary slowly over the graph structure. Pasdeloup et al. \cite{pasdeloup2017characterization} and Segarra et al.  \cite{segarra2017network} assume signals are given as a result of an arbitrary graph filtering process while learning the graph. Similarly, Mei et al. \cite{Mei2017signal} propose a polynomial autoregressive model for graph signals and a method to infer both the graph and coefficients of the polynomial. {\color{black}We note that the aforementioned works take a one-shot approach by learning the graph that best describes a given set of graph signals under suitable constraints. They do not explicitly use a training dataset with labeled graphs and graph signals, and hence, may be seen as unsupervised learning approaches for graph inference.
	
In this paper, we propose a supervised learning approach for predicting graphs from graph signals. A motivating example of supervised graph learning approach can arise in a social network scenario. In social networks, nodes represent different individuals / persons. Let us assume that we have a training dataset comprising graph signals and underlying graphs. The graph signals may comprised of different features, such as age, height, salary, food tastes, consumer habits, etc of the individuals. An underlying graph could be the one formed by a rule based on who follows whom, or a friendship list of individuals. Now, in the case of test data, we may have privacy, security or legal reasons for not revealing the true underlying graph. The task is then to estimate the underlying graph from observed graph signals for the test case. 


To the best of authors' knowledge, there exist no prior work on exploring supervised learning approach for graph learning. A supervised learning approach incorporates prior knowledge through training. 
In our approach, we model the edge-weights of the graph adjacency matrix as the predicted output of a linear regression model with an input consisting of a set of graph signal observations.
	We compute the optimal regression coefficients from training data by solving an optimization problem with a regularization based on the graph spectral profile of the graph signals. In order to make that the optimization problem convex, we use graph spectral profiles in the form of second order polynomial of the graph-Laplacian matrix. We then discuss how for a suitably constructed input feature, the regression coefficients represent a weighting of the different graph signals in the input for the prediction task. Simulation experiments reveal that our approach gives good performance for graph learning under difficult conditions, for examples, if training dataset is limited and noisy, and test input is also noisy. A block scheme summarizing our approach is shown in Figure \ref{fig:schematic}.}

	\begin{figure}
		\centering
		\includegraphics[width=3.6in]{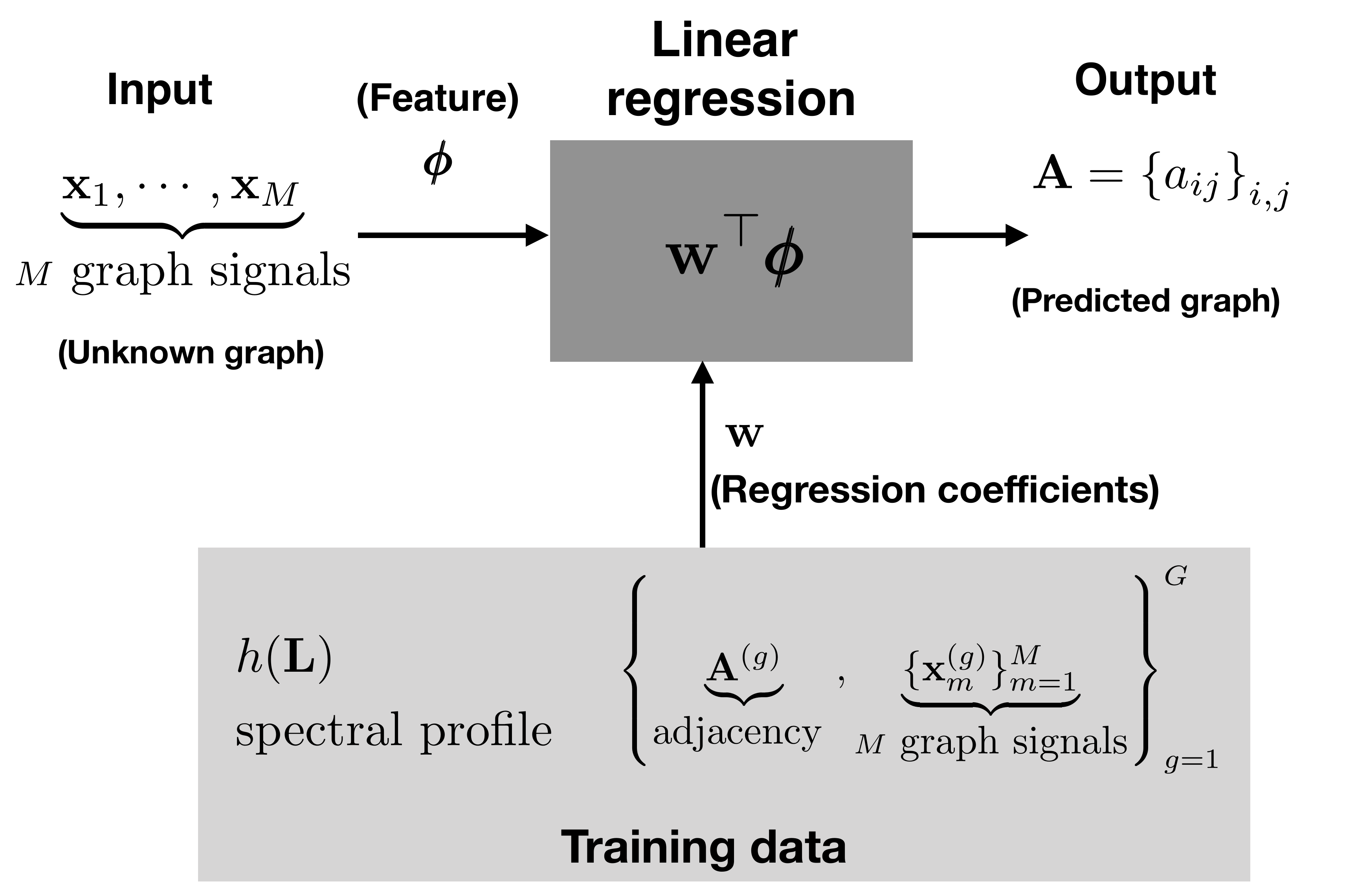}
		\label{fig:schematic}
		\caption{A block scheme of the proposed approach}
	\end{figure}
	\section{Linear regreSsion for graph learning}
	We first review the relevant basics from graph signal processing and thereafter propose linear regression for graph prediction.
	\subsection{Graph signal processing preliminaries}
	\label{sec:gsp}
	A graph signal refers to a vector whose components denote the values of a signal over different nodes of an associated graph. The relation between the different nodes are quantified in using a weighted edge, and the graph is described using the adjacency matrix $\mathbf{A}=[a_{ij}]_{i,j}$ whose $(ij)$th entry $a_{ij}$ denotes the edge-weight between $i$th and $j$th nodes. In this work, we consider only undirected graphs, which means $a_{ij}=a_{ji}$. The smoothness of a graph signal $\mathbf{x}\in\mathbb{R}^N=[x(1),\cdots,x(N)]^\top$ over a graph with $N$ nodes is typically defined using the quantity
	$\mathbf{x}^\top\mathbf{L}\mathbf{x}=\sum_{i,j}a_{ij}(x(i)-x(j))^2,$
	where $\mathbf{L}\triangleq\mathbf{D}-\mathbf{A}$ is the graph-Laplacian matrix\cite{Chung,Shuman}, and $\mathbf{D}=\mbox{diag}(d_1,d_2,\cdots,d_N)=\mbox{diag}(\mathbf{A}\mathbf{1}_N)$ is the diagonal degree matrix with $d_i=\sum_ja_{ij}$, $\mathbf{1}_N$ being the $N$-dimensional vector of all ones. A small value of $\mathbf{x}^\top\mathbf{L}\mathbf{x}$ implies that the values across connected nodes are similar, leading to the notion of a smooth graph signal. A graph signal $\mathbf{x}$ is also equivalently described in terms of its graph-Fourier transform which is defined as
	\begin{equation}
	\hat{\mathbf{x}}\triangleq\mathbf{V}^\top\mathbf{x},\nonumber
	\end{equation}
	where $\mathbf{V}$ denotes the eigenvector matrix of $\mathbf{L}=\mathbf{V}\pmb\Lambda\mathbf{V}^\top$, $\pmb\Lambda=\mbox{diag}(\lambda_1,\lambda_2,\cdots,\lambda_N)$ is the eigenvalue matrix arranged according to ascending values. By construction, $\lambda_1=0$, and the eigenvectors belonging to the smaller $\lambda_i$ vary smoothly over the graph and represent low-frequencies, and those of larger $\lambda_i$ vary more rapidly, denoting the high-frequencies. 
	
	In order to impose that $\mathbf{x}$ follows a particular graph-spectral profile (in terms of the distribution of its graph Fourier spectral coefficients), the regularization $\mathbf{x}^\top h(\mathbf{L})\mathbf{x}$ is often employed, where $h(x)=\sum_{l=0}^{L-1}h_lx^l$ is a polynomial of order $L<N$. This is because the regularization penalizes the different components of $\hat{\mathbf{x}}$ as \begin{equation*}
	\mathbf{x}^\top h(\mathbf{L})\mathbf{x}={\mathbf{x}}^\top\mathbf{V} h(\pmb\Lambda) \mathbf{V}^\top{\mathbf{x}}= \hat{\mathbf{x}}^\top h(\pmb\Lambda) \hat{\mathbf{x}}=\sum_{i=1}^Nh(\lambda_i)\hat{x}(i)^2.
	\end{equation*}
	In the case of smooth graph signals, $\mathbf{H}=\mathbf{L}$ is usually employed since $\mathbf{x}^\top h(\mathbf{L})\mathbf{x}=\hat{\mathbf{x}}^\top \pmb\Lambda\hat{\mathbf{x}}=\sum_{i=2}^N\lambda_i\hat{x}(i)^2$, which penalizes the high-frequency components of $\mathbf{x}$ more than the low-frequency ones. Similarly, setting $\mathbf{H}=\mathbf{L}^\dagger$ where $\mathbf{L}^\dagger$ is the pseudo-inverse of $\mathbf{L}$ leads to $\mathbf{x}^\top h(\mathbf{L})\mathbf{x}=\hat{\mathbf{x}}^\top \pmb\Lambda^\dagger\hat{\mathbf{x}}=\sum_{i=2}^N\frac{1}{\lambda_i}\hat{x}(i)^2$, which promotes $\mathbf{x}$ to have high-frequency behaviour. We refer the reader to \cite{Shuman,gsp_overview_ortega} and the references therein for a more {\color{black}comprehensive} view of graph signal processing framework.
	\subsection{Linear regression model for graph prediction}
	Linear and kernel regression form the workhorse of a gamut of applications which involves learning from from support vector machines \cite{Bishop} to deep learning \cite{kernel_deeplearning} to prediction and reconstruction of graph signals \cite{kergraph1,kergraph4,Arun_kergraph,Arun_mkrg,Arun_GPG,IOANNIDIS2018173}. In this Section, we propose a graph prediction approach using linear regression. We note that in this paper we use the terms graph prediction and graph learning interchangeably.
	
	Let us assume that we have a training set of one or more graphs indexed by $1\leq g\leq G$, $G\geq1$. Let the $g$th graph have $N_g$ nodes. We further assume that we have $M$ graph signals for each of the $G$ graphs as input. Let $\mathbf{A}^{(g)}$ denote the weighted adjacency matrix of the $g$th graph. Then the input-output pairs are given by $\{\mathbf{X}^{(g)},\mathbf{A}^{(g)}\}_{g=1}^G$ where $\mathbf{X}^{(g)}\in\mathbb{R}^{N_g\times M}$ denotes the matrix with the $M$ graph signals as columns. We consider the following model for predicting weight of the edge between the $i$th and $j$th nodes:
	\begin{equation}
	a^{(g)}_{i,j}=\mathbf{w}^\top\pmb\phi\left(\mathbf{x}^{(g)}(i),\mathbf{x}^{(g)}(j)\right) + \textrm{\color{black}model noise}.
	\label{eq:glkr_main}
	\end{equation}
	Here $\mathbf{w}\in\mathbb{R}^K$ is the regression coefficient vector, 
	$\pmb\phi$ is a $K$-dimensional feature vector where $\mathbf{x}^{(g)}(i)$ is the $i$'th row vector of $\mathbf{X}^{(g)}$ as follows
	\begin{eqnarray}
	\mathbf{x}^{(g)}(i)=[x_1^{(g)}(i),\cdots x_m^{(g)}(i),\cdots x_M^{(g)}(i)]^\top\in\mathbb{R}^M.\nonumber
	\end{eqnarray}
	Thus, the estimate of $a^{(g)}_{i,j}$ is given by 
	\begin{align}
	\hat{a}^{(g)}_{i,j}=\mathbf{w}^\top\pmb\phi\left(\mathbf{x}^{(g)}(i),\mathbf{x}^{(g)}(j)\right).
	\label{eq:a_est}
	\end{align}
	{\color{black}The input feature vector $\pmb\phi(\cdot)$ is assumed to be known. In the general case, it could be an arbitrary function of the input signal. Intuitively, for our problem it is desirable that the values of $\pmb\phi(\mathbf{x}(i),\mathbf{x}(j))$ should reflect on the similarity of the signal values between the nodes $i$ and $j$. The smaller the values of 
		$[ (x_1(i)-x_1(j))^2\cdots(x_m(i)-x_m(j))^2\cdots (x_M(i)-x_M(j))^2]$, the larger $\pmb\phi$ must be in order to ensure a strong edge between nodes $i$ and $j$. Similarly, dissimilar values across the nodes with large values of $[ (x_1(i)-x_1(j))^2\cdots (x_M(i)-x_M(j))^2]$ should result in a $\pmb\phi$ with small values. Though multiple such $\pmb\phi$ could be constructed, we use a simple choice with the $m$th component of $\pmb\phi$ defined by
		\begin{align} 
		\pmb\phi\left(\mathbf{x}(i),\mathbf{x}(j)\right)(m)=
		\frac{\sigma}{\max((x_m(i)-x_m(j)^2,\sigma)}.\,\,\,1\leq m\leq M\nonumber\end{align}
		$\sigma$ is a parameter introduced to avoid $\pmb\phi$ being unbounded when the signal values at nodes $i$ and $j$ are very similar. Thus, we observe that the $m$th component of $\pmb\phi$ reflects the similarity of the values of the $m$th graph signal at the $i$'th and $j$'th nodes.
		Correspondingly, the components of $\mathbf{w}$ represent the relative importance of the $M$ graph signals in predicting the graph. In order to ensure that the graphs have no self loops, that is, $a^{(g)}_{ii}=0,\,\,\forall i,g$, we make the additional definition that $\pmb\phi(\mathbf{x}^{(g)}(i),\mathbf{x}^{(g)}(j))
		=\mathbf{0}$ when $i=j$.
		%
		%
		
		Then, by collecting all the edge-weights predicted by the regression model \eqref{eq:a_est} for the $g$th graph as a matrix, we have the adjacency matrix estimate for the $g$th graph given by
		\begin{equation}
		\hat{\mathbf{A}}^{(g)}=\pmb{\Phi}^{(g)}\mathbf{W}_g,\,\,\forall g,\,\,\mbox{where}
		\label{eq:glkr_A}
		\end{equation}
		\begin{align}
		\pmb\Phi^{(g)}&=\left[\begin{matrix}
		\pmb\phi\left(\mathbf{x}^{(g)}(1),\mathbf{x}^{(g)}(1)\right)^\top\, \cdots 
		\pmb\phi\left(\mathbf{x}^{(g)}(1),\mathbf{x}^{(g)}(N_g)\right)^\top\\
		\vdots\\
		\pmb\phi\left(\mathbf{x}^{(g)}(N_g),\mathbf{x}^{(g)}(1)\right)^\top\, \cdots 
		\pmb\phi\left(\mathbf{x}^{(g)}(N_g),\mathbf{x}^{(g)}(N_g)\right)^\top\\
		\end{matrix}\right],\nonumber\\
		\mathbf{W}_g&=\left[\begin{matrix}
		\mathbf{w}\, \mathbf{0}\, \cdots\,\mathbf{0}\\
		\mathbf{0} \,  \mathbf{w}\, \cdots\,\mathbf{0}\\
		\vdots\\
		\mathbf{0}\, \mathbf{0}\,\cdots\,\mathbf{w}
		\end{matrix}\right]=\mathbf{I}_{N_g}\otimes \mathbf{w};\nonumber
		\end{align}
		$\otimes$ denotes the Kronecker product operation and $\mathbf{I}_{N_g}$ is the identity matrix of size $N_g$.}
	Then, the corresponding graph-Laplacian estimate is given by 
	\begin{eqnarray}
	\hat{\mathbf{L}}^{(g)}&=&\hat{\mathbf{D}}^{(g)}-\hat{\mathbf{A}}^{(g)}=\underbrace{\mbox{diag}(\pmb{\Phi}^{(g)}\mathbf{W}_g\mathbf{1}_{N_g})}_{=\mbox{diag}(\mathbf{A}^{(g)}\mathbf{1}_N)=\hat{\mathbf{D}}^{(g)}}-\pmb{\Phi}^{(g)}\mathbf{W}_g,\,\,\forall g\nonumber\\
	&\overset{(a)}{=}&\sum_{n=1}^{N_g}\mathbf{e}_n^\top(\pmb{\Phi}^{(g)}\mathbf{W}_g\mathbf{1}_{N_g})\mathbf{e}_n\mathbf{e}_n^\top-\pmb{\Phi}^{(g)}\mathbf{W}_g,\label{eq:glkr_L}
	\end{eqnarray}
	where $\mathbf{1}_{N_g}$ is the all ones column vector of length $N$ and $\mathbf{e}_n$ is the column vector with all zeros except one at the $n$th component. The equality $(a)$ follows from the matrix identity $\mbox{diag}(\mathbf{a})=\sum_{n=1}^{N}(\mathbf{e}_n^\top\mathbf{a})\mathbf{e}_n\mathbf{e}_n^\top$ where $\mathbf{a}\in\mathbb{R}^N$.
	\begin{figure*}
		\begin{align}
		\label{vecJW}
		\mathbf{g}&=2\vecc\left(\sum_g {\pmb\Phi^{(g)}}^\top\mathbf{A}^{(g)}\right)-\alpha\vecc\left(\sum_{g=1}^G h_1{\pmb{\Phi}^{(g)}}^\top\left[\sum_{n=1}^N\mathbf{e}_n\mathbf{1}_N^\top\tr\left(\mathbf{e}_n\mathbf{e}_n^\top\mathbf{X}^{(g)}{\mathbf{X}^{(g)}}^\top\right)-\mathbf{X}^{(g)}{\mathbf{X}^{(g)}}^\top\right]\right)\nonumber\\
		\mathbf{F}&=
		2\left[\mathbf{I}_N\otimes \sum_g{\pmb\Phi^{(g)}}^\top\pmb\Phi^{(g)}\right]+2\alpha\sum_{g=1}^G h_2\left[\mathbf{X}^{(g)}{\mathbf{X}^{(g)}}^\top\right]^\top\otimes\left[{\pmb{\Phi}^{(g)}}^\top\pmb{\Phi}^{(g)}\right]+\beta\mathbf{I}_{MN}\\
		&\quad+\alpha\sum_{g=1}^G h_2\sum_{n_1=1}^N\sum_{n_2=1}^N\left[(\mathbf{1}_N\mathbf{e}_{n_1}^\top)\otimes{\pmb{\Phi}^{(g)}}^\top\mathbf{1}_N\mathbf{e}_{n_2}^\top\pmb{\Phi}^{(g)}+(\mathbf{e}_{n_1}\mathbf{1}_N^\top)\otimes{\pmb{\Phi}^{(g)}}^\top\mathbf{e}_{n_2}\mathbf{1}^\top_N\pmb{\Phi}^{(g)}\right]\tr\left(\mathbf{e}_{n_1}\mathbf{e}_{n_1}^\top\mathbf{e}_{n_2}\mathbf{e}_{n_2}^\top\mathbf{X}^{(g)}{\mathbf{X}^{(g)}}^\top\right)\nonumber\\
		&\quad+\alpha\sum_{g=1}^G -2h_2\sum_{n=1}^N\vecc\left({\pmb{\Phi}^{(g)}}^\top\mathbf{1}_N\mathbf{e}_n^\top\right)\vecc\left(\left[\mathbf{X}^{(g)}{\mathbf{X}^{(g)}}^\top\mathbf{e}_n\mathbf{e}_n^\top\pmb{\Phi}^{(g)}\right]^\top\right)^\top+\vecc\left({\pmb{\Phi}^{(g)}}^\top\mathbf{e}_n\mathbf{e}_n^\top\mathbf{X}^{(g)}{\mathbf{X}^{(g)}}^\top\right)\vecc\left({\pmb{\Phi}^{(g)}}^\top\mathbf{1}_N\mathbf{e}_n^\top\right)^\top\nonumber\\
		\bar{\pmb\rho}&=[\pmb\rho_1\,\pmb\rho_2\,\cdots\pmb\rho_K]\in\mathbb{R}^{N^2K\times K}\,\,\mbox{where}\,\,	\pmb{\rho}_k=\mathbf{I}_N\otimes\mathbf{e}_k\nonumber\\
		\Omega&=[\Omega_1\, \Omega_2\,\cdots\Omega_N],\,\,\mbox{where   }\,\,\,\Omega_j\triangleq[(j-1)(N+1)K+1:(j-1)(N+1)K+K] 
		\nonumber
		\end{align}
	\end{figure*}
	\subsection{Linear Regression for Graph Prediction}
	Given Eq. \eqref{eq:glkr_main}, \eqref{eq:glkr_A}, \eqref{eq:glkr_L}, our goal is to compute the optimal regression coefficients $\mathbf{w}$ such that the following cost is minimized
	\begin{align}
	J(\mathbf{w})&=
	\sum_{g=1}^G\|\mathbf{A}^{(g)}-\hat{\mathbf{A}}^{(g)}\|_F^2+\alpha\sum_{g=1}^G\tr\left({\mathbf{X}^{(g)}}^\top h(\hat{\mathbf{L}}^{(g)})\mathbf{X}^{(g)}\right)\nonumber\\
	&\quad+\frac{\beta }{G}\sum_{g=1}^G\tr(\mathbf{W}_g^\top\mathbf{W}_g),\nonumber
	\end{align}
	where the first regularization term imposes the learnt graphs to have the desired graph-spectral profile (as discussed in Section \ref{sec:gsp}). The second regularization ensures that $\mathbf{w}$ remains bounded. In imposing the regularization, we have implicitly assumed that the graph signals follow a particular graph spectral profile over the associated graph. 
	This assumption is reasonable in cases such as social networks where the different communities in the graph might still have similar dynamics or distribution of features across the nodes. 
	Now, if we make the further simplifying assumption that all training graphs have the same size $N_g=N$, $J(\mathbf{w})$ is expressible as follows:
	\begin{align}
	\label{eq:glkr_Jw_1}
	J(\mathbf{w})&=\sum_{g=1}^G\|\mathbf{A}^{(g)}-\pmb\Phi^{(g)}\mathbf{W}\|_F^2\\&\quad+\alpha\sum_{g=1}^G\tr\left({\mathbf{X}^{(g)}}^\top h(\hat{\mathbf{L}}^{(g)})\mathbf{X}^{(g)}\right)+\beta \tr(\mathbf{W}^\top\mathbf{W}),\nonumber
	\end{align}
	where $\mathbf{W}=\mathbf{I}_N\otimes\mathbf{w}$. 
	We note that \eqref{eq:glkr_Jw_1} is not convex in $\mathbf{w}$ in general.  Convex optimization problems have a global minimum and often resulting in tractable closed form solutions. This makes it desirable that $J(\mathbf{w})$ in Eq. \eqref{eq:glkr_Jw_1} be convex. This directly translates to the requirement that $h(\cdot)$ be a second order polynomial of the form $h(x)=h_0+h_1x+h_2x^2$. A second order $h(\cdot)$ is nevertheless fairly generic and can represent various kinds of graph signal behaviour such as low-pass, high-pass, etc\cite{Shuman,moura2014bigdata}.
	As $J(\mathbf{w})$ is now convex, the unique global optimal value of $\mathbf{w}$ is obtained by setting the derivative of $J(\mathbf{w})$ with respect to $\mathbf{w}$ equal to zero. This leads to the following proposition:
	\begin{proposition}
		The optimal regression coefficients $\mathbf{w}_{\mbox{opt}}$ that minimizes the cost in Eq. \eqref{eq:glkr_Jw_1} satifies
		$\mathcal{C}_\Omega(\bar{\pmb\rho}^\top\mathbf{F})(\mathbf{1}_N\otimes\mathbf{w}_{\mbox{opt}})=\bar{\pmb\rho}^\top\mathbf{g}$ where $\mathcal{C}_{\Omega}(\mathbf{X})$ denotes the matrix operation that returns the submatrix of $\mathbf{X}$ with only columns indexed by set $\Omega$, and quantities $\Omega$, $\bar{\pmb\rho}$, $\mathbf{F}$, and $\mathbf{g}$ are as defined in Eq. \eqref{vecJW}.
	\end{proposition}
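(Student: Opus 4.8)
The plan is to first drop the Kronecker constraint, treat $\vecc(\mathbf{W})$ as a free vector, write $J$ as a plain convex quadratic, and only afterwards re-inject the structure $\mathbf{W}=\mathbf{I}_N\otimes\mathbf{w}$ through the chain rule. Concretely, I would recast \eqref{eq:glkr_Jw_1} in the form
\[
J=\tfrac12\,\vecc(\mathbf{W})^\top\mathbf{F}\,\vecc(\mathbf{W})-\mathbf{g}^\top\vecc(\mathbf{W})+\mathrm{const},
\]
so that $\nabla_{\vecc(\mathbf{W})}J=\mathbf{F}\,\vecc(\mathbf{W})-\mathbf{g}$ with $\mathbf{F}$ the (symmetric) Hessian. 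The data-fidelity term is handled with $\vecc(\pmb\Phi^{(g)}\mathbf{W})=(\mathbf{I}_N\otimes\pmb\Phi^{(g)})\vecc(\mathbf{W})$, which supplies the block $\mathbf{I}_N\otimes\sum_g{\pmb\Phi^{(g)}}^\top\pmb\Phi^{(g)}$ to $\mathbf{F}$ and the contribution $2\,\vecc(\sum_g{\pmb\Phi^{(g)}}^\top\mathbf{A}^{(g)})$ to $\mathbf{g}$; the norm penalty adds the $\beta\mathbf{I}$ term to $\mathbf{F}$. Since convexity has already been argued for $h(x)=h_0+h_1x+h_2x^2$, the quadratic is convex and a vanishing gradient will characterise the minimiser.

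The main obstacle is the regularizer $\alpha\sum_g\tr({\mathbf{X}^{(g)}}^\top h(\hat{\mathbf{L}}^{(g)})\mathbf{X}^{(g)})$. Using that $\hat{\mathbf{L}}^{(g)}$ in \eqref{eq:glkr_L} is \emph{affine} in $\mathbf{W}$, the polynomial splits cleanly: the $h_0$ piece is constant, the $h_1$ piece is linear in $\mathbf{W}$ and feeds $\mathbf{g}$, and the $h_2$ piece is quadratic in $\mathbf{W}$ and feeds $\mathbf{F}$. Expanding $\hat{\mathbf{L}}^{(g)2}$ via the identity $\mathrm{diag}(\mathbf{a})=\sum_n(\mathbf{e}_n^\top\mathbf{a})\mathbf{e}_n\mathbf{e}_n^\top$ and then converting every $\tr(\cdots\mathbf{W}\cdots\mathbf{W}\cdots)$ into a bilinear form in $\vecc(\mathbf{W})$ through $\tr(\mathbf{B}^\top\mathbf{C})=\vecc(\mathbf{B})^\top\vecc(\mathbf{C})$ and $\vecc(\mathbf{P}\mathbf{W}\mathbf{Q})=(\mathbf{Q}^\top\otimes\mathbf{P})\vecc(\mathbf{W})$ is exactly what produces the Kronecker-and-trace sums over $n$, $n_1$, $n_2$ in the definition of $\mathbf{F}$ and $\mathbf{g}$ in \eqref{vecJW}. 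This bookkeeping is routine but heavy; I expect the double sum $\sum_{n_1,n_2}$ and the two symmetrised rank-one $\vecc(\cdot)\vecc(\cdot)^\top$ terms to be the most error-prone, and I would verify them by checking symmetry of $\mathbf{F}$ and by a small-$N$ sanity check.

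With $\mathbf{F}$ and $\mathbf{g}$ in hand, the structure is reimposed in two short steps. First, $\vecc(\mathbf{W})=\bar{\pmb\rho}\,\mathbf{w}$, because $\bar{\pmb\rho}\,\mathbf{w}=\sum_{k=1}^K w_k\,\vecc(\mathbf{I}_N\otimes\mathbf{e}_k)=\vecc(\mathbf{I}_N\otimes\mathbf{w})$; hence by the chain rule the stationarity condition in $\mathbf{w}$ is $\bar{\pmb\rho}^\top(\mathbf{F}\,\vecc(\mathbf{W})-\mathbf{g})=\mathbf{0}$, i.e. $\bar{\pmb\rho}^\top\mathbf{F}\,\vecc(\mathbf{W})=\bar{\pmb\rho}^\top\mathbf{g}$. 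Second, I would observe that the nonzero entries of $\vecc(\mathbf{I}_N\otimes\mathbf{w})$ occupy precisely the index set $\Omega$ (column $j$ of $\mathbf{I}_N\otimes\mathbf{w}$ carries $\mathbf{w}$ in local rows $(j-1)K+1{:}\,jK$, giving global positions $(j-1)(N+1)K+1{:}\,(j-1)(N+1)K+K=\Omega_j$) and that these entries stack into $\mathbf{1}_N\otimes\mathbf{w}$. Therefore $\mathbf{F}\,\vecc(\mathbf{W})=\mathcal{C}_\Omega(\mathbf{F})(\mathbf{1}_N\otimes\mathbf{w})$, and since column selection commutes with left multiplication, $\bar{\pmb\rho}^\top\mathcal{C}_\Omega(\mathbf{F})=\mathcal{C}_\Omega(\bar{\pmb\rho}^\top\mathbf{F})$, which yields $\mathcal{C}_\Omega(\bar{\pmb\rho}^\top\mathbf{F})(\mathbf{1}_N\otimes\mathbf{w}_{\mbox{opt}})=\bar{\pmb\rho}^\top\mathbf{g}$ as claimed. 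Convexity of $J$ guarantees this stationary point is the global optimum.
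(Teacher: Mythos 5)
Your proposal is correct and takes essentially the same route as the paper: both identify the gradient in the unconstrained variable $\vecc(\mathbf{W})$ as $\mathbf{F}\vecc(\mathbf{W})-\mathbf{g}$ using the same vec/Kronecker/trace identities, re-impose the structure $\mathbf{W}=\mathbf{I}_N\otimes\mathbf{w}$ via the chain rule through $\bar{\pmb\rho}$, and then use the support pattern of $\vecc(\mathbf{W})$ (the index set $\Omega$) to collapse $\bar{\pmb\rho}^\top\mathbf{F}\vecc(\mathbf{W})$ into $\mathcal{C}_\Omega(\bar{\pmb\rho}^\top\mathbf{F})(\mathbf{1}_N\otimes\mathbf{w})$. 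The only differences are cosmetic: you assemble $J$ directly as a quadratic in $\vecc(\mathbf{W})$ and read off the gradient, whereas the paper differentiates term by term with matrix calculus and vectorizes afterward, and your final column-selection step (including $\vecc(\mathbf{W})=\bar{\pmb\rho}\,\mathbf{w}$ and the commutation of $\mathcal{C}_\Omega$ with left multiplication) is stated more explicitly than the paper's.
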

	
	\begin{proof}
		The proof follows from the use of matrix calculus to take the gradient of $J(\mathbf{w})$ with respect to $\mathbf{w}$, and uses chain rule and other standard properties of Kronecker product and vectorization. 
		\begin{align}
		\Aboxed{J(\mathbf{w})=\sum_{g=1}^G\|\mathbf{A}^{(g)}-\pmb\Phi^{(g)}\mathbf{W}\|_F^2+\alpha\sum_{g=1}^G\tr\left({\mathbf{X}^{(g)}}^\top h_g(\mathbf{L}^{(g)})\mathbf{X}^{(g)}\right)+\beta N\tr(\mathbf{w}^\top\mathbf{w})}\nonumber
		\end{align}
		We shall hereafter use $\sum_g$ and $\sum_n$ to denote $\sum_{g=1}^G$ and $\sum_{n=1}^N$, to keep the notation simple. Then, from \eqref{eq:glkr_A} and \eqref{eq:glkr_L} we have that
		\begin{align}
		J(\mathbf{w})&=\underbrace{\sum_{g=1}^G\|\mathbf{A}^{(g)}-\pmb\Phi^{(g)}\mathbf{W}\|_F^2}_{J_1(\mathbf{W})}\nonumber\\&\quad+\alpha\sum_{g=1}^G\underbrace{\tr\left({\mathbf{X}^{(g)}}^\top h_g\left(\sum_{n=1}^N\mathbf{e}_n^\top(\pmb{\Phi}^{(g)}\mathbf{W})(\mathbf{1}_N\mathbf{e}_n)\mathbf{e}_n^\top-\pmb{\Phi}^{(g)}\mathbf{W}\right)\mathbf{X}^{(g)}\right)}_{J_2(\mathbf{W})}\nonumber\\
		&\quad+\beta \underbrace{\tr(\mathbf{W}^\top\mathbf{W})}_{J_3(\mathbf{W})}\nonumber\\
		&=J_1(\mathbf{W})+\alpha J_2(\mathbf{W})+\beta J_3(\mathbf{W})
		\label{eq:glkr_Jw}
		\end{align}
		\subsection{Simplifying cost function}
		We now analyze these terms separately:
		\begin{align}
		J_1(\mathbf{W})&=\sum_g\|\mathbf{A}^{(g)}-\pmb\Phi^{(g)}\mathbf{W}\|_F^2=\sum_g\tr([\mathbf{A}^{(g)}-\pmb\Phi^{(g)}\mathbf{W}]^\top[\mathbf{A}^{(g)}-\pmb\Phi^{(g)}\mathbf{W}])\nonumber\\
		&=\sum_g\tr([\mathbf{A}^{(g)}]^\top\mathbf{A}^{(g)})-2\sum_g\tr([\mathbf{A}^{(g)}]^\top\pmb\Phi^{(g)}\mathbf{W})+\sum_g\tr([\pmb\Phi^{(g)}\mathbf{W}]^\top[\pmb\Phi^{(g)}\mathbf{W}])\nonumber\\
		&=\sum_g\tr([\mathbf{A}^{(g)}]^\top\mathbf{A}^{(g)})-2\sum_g\tr([\mathbf{A}^{(g)}]^\top\pmb\Phi^{(g)}\mathbf{W})+\sum_g\tr(\mathbf{W}^\top{\pmb\Phi^{(g)}}^\top\pmb\Phi^{(g)}\mathbf{W})\nonumber\\
		&=\sum_g\tr([\mathbf{A}^{(g)}]^\top\mathbf{A}^{(g)})-2\tr(\sum_g[\mathbf{A}^{(g)}]^\top\pmb\Phi^{(g)}\mathbf{W})+\tr(\mathbf{W}^\top\sum_g[{\pmb\Phi^{(g)}}^\top\pmb\Phi^{(g)}]\mathbf{W})\nonumber\\
		\label{eq:glkr_J1}
		\end{align}
		\begin{align}
		J_2(\mathbf{W})&=\sum_{g=1}^G\tr\left({\mathbf{X}^{(g)}}^\top h_g\left(\mathbf{L}^{(g)}\right)\mathbf{X}^{(g)}\right)
		=\sum_{g=1}^G\tr\left( h_g\left({\mathbf{L}}^{(g)}\right)\mathbf{X}^{(g)}{\mathbf{X}^{(g)}}^\top\right)\nonumber\\
		&=\sum_{g=1}^G\tr\left( h_0^{(g)}\mathbf{X}^{(g)}{\mathbf{X}^{(g)}}^\top\right)+\sum_{g=1}^G\tr\left( h_1^{(g)}{\mathbf{L}}^{(g)}\mathbf{X}^{(g)}{\mathbf{X}^{(g)}}^\top\right)+\sum_{g=1}^G\tr\left( h_2^{(g)}{\mathbf{L}^{(g)}}^2\mathbf{X}^{(g)}{\mathbf{X}^{(g)}}^\top\right)\nonumber\\
		&=\sum_{g=1}^G\tr\left( h_0^{(g)}\mathbf{X}^{(g)}{\mathbf{X}^{(g)}}^\top\right)\nonumber\\
		&\quad+\sum_{g=1}^G\tr\left( h_1^{(g)}\left[\sum_{n=1}^N\underbrace{\mathbf{e}_n^\top(\pmb{\Phi}^{(g)}\mathbf{W}\mathbf{1}_N)}_{\mbox{scalar}}\mathbf{e}_n\mathbf{e}_n^\top-\pmb{\Phi}^{(g)}\mathbf{W}\right]\mathbf{X}^{(g)}{\mathbf{X}^{(g)}}^\top\right)\nonumber\\
		&\quad+\sum_{g=1}^G\tr\left( h_2^{(g)}\left[\sum_{n=1}^N\mathbf{e}_n^\top(\pmb{\Phi}^{(g)}\mathbf{W}\mathbf{1}_N)\mathbf{e}_n\mathbf{e}_n^\top-\pmb{\Phi}^{(g)}\mathbf{W}\right]^2\mathbf{X}^{(g)}{\mathbf{X}^{(g)}}^\top\right)\nonumber\\
		&=\sum_{g=1}^G\tr\left( h_0^{(g)}\mathbf{X}^{(g)}{\mathbf{X}^{(g)}}^\top\right)\nonumber\\
		&\quad+\sum_{g=1}^G\tr\left( h_1^{(g)}\left[\sum_{n=1}^N\mathbf{e}_n^\top(\pmb{\Phi}^{(g)}\mathbf{W}\mathbf{1}_N)\mathbf{e}_n\mathbf{e}_n^\top-\pmb{\Phi}^{(g)}\mathbf{W}\right]\mathbf{X}^{(g)}{\mathbf{X}^{(g)}}^\top\right)\nonumber\\
		&\quad+\sum_{g=1}^G\tr\left( h_2^{(g)}\left[\sum_{n_1=1}^N\sum_{n_2=1}^N\mathbf{e}_{n_1}^\top(\pmb{\Phi}^{(g)}\mathbf{W}\mathbf{1}_N)\mathbf{e}_{n_1}\mathbf{e}_{n_1}^\top\mathbf{e}_{n_2}^\top(\pmb{\Phi}^{(g)}\mathbf{W}\mathbf{1}_N)\mathbf{e}_{n_2}\mathbf{e}_{n_2}^\top\right]\mathbf{X}^{(g)}{\mathbf{X}^{(g)}}^\top\right)\nonumber\\
		&\quad+\sum_{g=1}^G\tr\left( -2h_2^{(g)}\left[\sum_{n=1}^N\mathbf{e}_n^\top(\pmb{\Phi}^{(g)}\mathbf{W}\mathbf{1}_N)\mathbf{e}_n\mathbf{e}_n^\top\pmb{\Phi}^{(g)}\mathbf{W}\right]\mathbf{X}^{(g)}{\mathbf{X}^{(g)}}^\top\right)\nonumber\\
		&\quad+\sum_{g=1}^G\tr\left( h_2^{(g)}\left[\pmb{\Phi}^{(g)}\mathbf{W}\pmb{\Phi}^{(g)}\mathbf{W}\right]\mathbf{X}^{(g)}{\mathbf{X}^{(g)}}^\top\right)\nonumber\\
		\end{align}
		\begin{align}
		J_2(\mathbf{W})&\overset{(a)}{=}\sum_{g=1}^G\tr\left( h_0^{(g)}\mathbf{X}^{(g)}{\mathbf{X}^{(g)}}^\top\right)\nonumber\\
		&\quad+\sum_{g=1}^G\tr\left( h_1^{(g)}\left[\sum_{n=1}^N\mathbf{e}_n^\top(\pmb{\Phi}^{(g)}\mathbf{W}\mathbf{1}_N)\mathbf{e}_n\mathbf{e}_n^\top\right]\mathbf{X}^{(g)}{\mathbf{X}^{(g)}}^\top\right)\nonumber\\
		&\quad+\sum_{g=1}^G\tr\left( h_1^{(g)}\left[-\pmb{\Phi}^{(g)}\mathbf{W}\right]\mathbf{X}^{(g)}{\mathbf{X}^{(g)}}^\top\right)\nonumber\\
		&\quad+\sum_{g=1}^G\tr\left( h_2^{(g)}\left[\sum_{n_1=1}^N\sum_{n_2=1}^N\mathbf{e}_{n_1}^\top(\mathbf{W}^\top{\pmb{\Phi}^{(g)}}^\top\mathbf{1}_N)\mathbf{e}_{n_1}\mathbf{e}_{n_1}^\top\mathbf{e}_{n_2}^\top(\pmb{\Phi}^{(g)}\mathbf{W}\mathbf{1}_N)\mathbf{e}_{n_2}\mathbf{e}_{n_2}^\top\right]\mathbf{X}^{(g)}{\mathbf{X}^{(g)}}^\top\right)\nonumber\\
		&\quad+\sum_{g=1}^G\tr\left( -2h_2^{(g)}\left[\sum_{n=1}^N\mathbf{e}_n^\top(\mathbf{W}^\top{\pmb{\Phi}^{(g)}}^\top\mathbf{1}_N)\mathbf{e}_n\mathbf{e}_n^\top\pmb{\Phi}^{(g)}\mathbf{W}\right]\mathbf{X}^{(g)}{\mathbf{X}^{(g)}}^\top\right)\nonumber\\
		&\quad+\sum_{g=1}^G\tr\left( h_2^{(g)}\left[\mathbf{W}^\top{\pmb{\Phi}^{(g)}}^\top\pmb{\Phi}^{(g)}\mathbf{W}\right]\mathbf{X}^{(g)}{\mathbf{X}^{(g)}}^\top\right)\nonumber
		\end{align}
		\begin{align}
		J_2(\mathbf{W})&\overset{(a)}{=}\sum_{g=1}^G\tr\left( h_0^{(g)}\mathbf{X}^{(g)}{\mathbf{X}^{(g)}}^\top\right)\nonumber\\
		&\quad+\sum_{g=1}^G h_1^{(g)}\sum_{n=1}^N\mathbf{e}_n^\top(\pmb{\Phi}^{(g)}\mathbf{W}\mathbf{1}_N)\tr\left(\mathbf{e}_n\mathbf{e}_n^\top\mathbf{X}^{(g)}{\mathbf{X}^{(g)}}^\top\right)\nonumber\\
		&\quad+\sum_{g=1}^G\tr\left( h_1^{(g)}\left[-\pmb{\Phi}^{(g)}\mathbf{W}\right]\mathbf{X}^{(g)}{\mathbf{X}^{(g)}}^\top\right)\nonumber\\
		&\quad+\sum_{g=1}^G h_2^{(g)}\sum_{n_1=1}^N\sum_{n_2=1}^N\mathbf{e}_{n_1}^\top(\mathbf{W}^\top{\pmb{\Phi}^{(g)}}^\top\mathbf{1}_N)\mathbf{e}_{n_2}^\top(\pmb{\Phi}^{(g)}\mathbf{W}\mathbf{1}_N)\tr\left(\mathbf{e}_{n_1}\mathbf{e}_{n_1}^\top\mathbf{e}_{n_2}\mathbf{e}_{n_2}^\top\mathbf{X}^{(g)}{\mathbf{X}^{(g)}}^\top\right)\nonumber\\
		&\quad+\sum_{g=1}^G -2h_2^{(g)}\sum_{n=1}^N\mathbf{e}_n^\top(\mathbf{W}^\top{\pmb{\Phi}^{(g)}}^\top\mathbf{1}_N)\tr\left(\mathbf{e}_n\mathbf{e}_n^\top\pmb{\Phi}^{(g)}\mathbf{W}\mathbf{X}^{(g)}{\mathbf{X}^{(g)}}^\top\right)\nonumber\\
		&\quad+\sum_{g=1}^G\tr\left( h_2^{(g)}\left[\mathbf{W}^\top{\pmb{\Phi}^{(g)}}^\top\pmb{\Phi}^{(g)}\mathbf{W}\right]\mathbf{X}^{(g)}{\mathbf{X}^{(g)}}^\top\right)\nonumber
		\end{align}
		
		\begin{align}
		J_2(\mathbf{W})&\overset{(a)}{=}\sum_{g=1}^G\tr\left( h_0^{(g)}\mathbf{X}^{(g)}{\mathbf{X}^{(g)}}^\top\right)\nonumber\\
		&\quad+\sum_{g=1}^G h_1^{(g)}\sum_{n=1}^N\tr\left(\mathbf{e}_n^\top(\pmb{\Phi}^{(g)}\mathbf{W}\mathbf{1}_N)\right)\tr\left(\mathbf{e}_n\mathbf{e}_n^\top\mathbf{X}^{(g)}{\mathbf{X}^{(g)}}^\top\right)\nonumber\\
		&\quad+\sum_{g=1}^G\tr\left( h_1^{(g)}\left[-\pmb{\Phi}^{(g)}\mathbf{W}\right]\mathbf{X}^{(g)}{\mathbf{X}^{(g)}}^\top\right)\nonumber\\
		&\quad+\sum_{g=1}^G h_2^{(g)}\sum_{n_1=1}^N\sum_{n_2=1}^N\tr\left(\mathbf{e}_{n_1}^\top(\mathbf{W}^\top{\pmb{\Phi}^{(g)}}^\top\mathbf{1}_N)\mathbf{e}_{n_2}^\top(\pmb{\Phi}^{(g)}\mathbf{W}\mathbf{1}_N)\right)\tr\left(\mathbf{e}_{n_1}\mathbf{e}_{n_1}^\top\mathbf{e}_{n_2}\mathbf{e}_{n_2}^\top\mathbf{X}^{(g)}{\mathbf{X}^{(g)}}^\top\right)\nonumber\\
		&\quad+\sum_{g=1}^G -2h_2^{(g)}\sum_{n=1}^N\tr\left(\mathbf{e}_n^\top(\mathbf{W}^\top{\pmb{\Phi}^{(g)}}^\top\mathbf{1}_N)\right)\tr\left(\mathbf{X}^{(g)}{\mathbf{X}^{(g)}}^\top\mathbf{e}_n\mathbf{e}_n^\top\pmb{\Phi}^{(g)}\mathbf{W}\right)\nonumber\\
		&\quad+\sum_{g=1}^G\tr\left( h_2^{(g)}\left[\mathbf{W}^\top{\pmb{\Phi}^{(g)}}^\top\pmb{\Phi}^{(g)}\mathbf{W}\right]\mathbf{X}^{(g)}{\mathbf{X}^{(g)}}^\top\right)\nonumber
		\end{align}
		
		\begin{align}
		J_2(\mathbf{W})&\overset{(a)}{=}\sum_{g=1}^G\tr\left( h_0^{(g)}\mathbf{X}^{(g)}{\mathbf{X}^{(g)}}^\top\right)\nonumber\\
		&\quad+\sum_{g=1}^G h_1^{(g)}\sum_{n=1}^N\tr\left(\mathbf{W}\mathbf{1}_N\mathbf{e}_n^\top \pmb{\Phi}^{(g)}\right)\tr\left(\mathbf{e}_n\mathbf{e}_n^\top\mathbf{X}^{(g)}{\mathbf{X}^{(g)}}^\top\right)\nonumber\\
		&\quad+\sum_{g=1}^G\tr\left( h_1^{(g)}\left[-\pmb{\Phi}^{(g)}\mathbf{W}\right]\mathbf{X}^{(g)}{\mathbf{X}^{(g)}}^\top\right)\nonumber\\
		&\quad+\sum_{g=1}^G h_2^{(g)}\sum_{n_1=1}^N\sum_{n_2=1}^N\tr\left(\mathbf{W}^\top{\pmb{\Phi}^{(g)}}^\top\mathbf{1}_N\mathbf{e}_{n_2}^\top\pmb{\Phi}^{(g)}\mathbf{W}\mathbf{1}_N\mathbf{e}_{n_1}^\top\right)\tr\left(\mathbf{e}_{n_1}\mathbf{e}_{n_1}^\top\mathbf{e}_{n_2}\mathbf{e}_{n_2}^\top\mathbf{X}^{(g)}{\mathbf{X}^{(g)}}^\top\right)\nonumber\\
		&\quad+\sum_{g=1}^G -2h_2^{(g)}\sum_{n=1}^N\tr\left(\mathbf{W}^\top{\pmb{\Phi}^{(g)}}^\top\mathbf{1}_N\mathbf{e}_n^\top\right)\tr\left(\mathbf{X}^{(g)}{\mathbf{X}^{(g)}}^\top\mathbf{e}_n\mathbf{e}_n^\top\pmb{\Phi}^{(g)}\mathbf{W}\right)\nonumber\\
		&\quad+\sum_{g=1}^G\tr\left( h_2^{(g)}\left[\mathbf{W}^\top{\pmb{\Phi}^{(g)}}^\top\pmb{\Phi}^{(g)}\mathbf{W}\right]\mathbf{X}^{(g)}{\mathbf{X}^{(g)}}^\top\right)
		\label{eq:glkr_J2_2}
		\end{align}

		\subsection{Taking derivatives of the cost function parts with respect to $\mathbf{W}$}
		In order to keep the mathematics self-contained, we list here some properties of matrix calculus which we shall be using later:
		\begin{align}
		\vecc(\mathbf{AXB})=(\mathbf{B}^\top\otimes\mathbf{A})\vecc(\mathbf{X})\nonumber\\
		\frac{\partial \tr(\mathbf{AX})}{\partial\mathbf{X}}=\mathbf{A}^\top\nonumber\\
		\frac{\partial \tr(\mathbf{X^\top AXB})}{\partial\mathbf{X}}=\mathbf{A}\mathbf{X}\mathbf{B}+\mathbf{A}^\top\mathbf{X}\mathbf{B}^\top\nonumber\\
		\tr{(\mathbf{A}^\top\mathbf{B})}=(\vecc\mathbf{A})^\top\vecc\mathbf{B}\nonumber
		\end{align}
		Then, from \eqref{eq:glkr_J1} we have that
		\begin{align}
		\frac{\partial J_1(\mathbf{W})}{\partial\mathbf{W}}&=-2\sum_g {\pmb\Phi^{(g)}}^\top\mathbf{A}^{(g)}+2\sum_g[{\pmb\Phi^{(g)}}^\top\pmb\Phi^{(g)}]\mathbf{W}
		\label{eq:glkr_deriv_J1}
		\end{align}
		Similarly, from \eqref{eq:glkr_J2_2} we have that
		\begin{align}
		\frac{\partial J_2(\mathbf{W})}{\partial\mathbf{W}}&\overset{(a)}{=}\sum_{g=1}^G h_1^{(g)}\sum_{n=1}^N{\pmb{\Phi}^{(g)}}^\top\mathbf{e}_n\mathbf{1}_N^\top\tr\left(\mathbf{e}_n\mathbf{e}_n^\top\mathbf{X}^{(g)}{\mathbf{X}^{(g)}}^\top\right)\nonumber\\
		&\quad-\sum_{g=1}^G h_1^{(g)}{\pmb{\Phi}^{(g)}}^\top\mathbf{X}^{(g)}{\mathbf{X}^{(g)}}^\top\nonumber\\
		&\quad+\sum_{g=1}^G h_2^{(g)}\sum_{n_1=1}^N\sum_{n_2=1}^N{\pmb{\Phi}^{(g)}}^\top\mathbf{1}_N\mathbf{e}_{n_2}^\top\pmb{\Phi}^{(g)}\mathbf{W}\mathbf{1}_N\mathbf{e}_{n_1}^\top\tr\left(\mathbf{e}_{n_1}\mathbf{e}_{n_1}^\top\mathbf{e}_{n_2}\mathbf{e}_{n_2}^\top\mathbf{X}^{(g)}{\mathbf{X}^{(g)}}^\top\right)\nonumber\\
		&\quad+\sum_{g=1}^G h_2^{(g)}\sum_{n_1=1}^N\sum_{n_2=1}^N{\pmb{\Phi}^{(g)}}^\top\mathbf{e}_{n_2}\mathbf{1}^\top_N\pmb{\Phi}^{(g)}\mathbf{W}\mathbf{e}_{n_1}\mathbf{1}_N^\top\tr\left(\mathbf{e}_{n_1}\mathbf{e}_{n_1}^\top\mathbf{e}_{n_2}\mathbf{e}_{n_2}^\top\mathbf{X}^{(g)}{\mathbf{X}^{(g)}}^\top\right)\nonumber\\
		&\quad+\sum_{g=1}^G -2h_2^{(g)}\sum_{n=1}^N{\pmb{\Phi}^{(g)}}^\top\mathbf{1}_N\mathbf{e}_n^\top\tr\left(\mathbf{X}^{(g)}{\mathbf{X}^{(g)}}^\top\mathbf{e}_n\mathbf{e}_n^\top\pmb{\Phi}^{(g)}\mathbf{W}\right)\nonumber\\
		&\quad+\sum_{g=1}^G -2h_2^{(g)}\sum_{n=1}^N\tr\left(\mathbf{W}^\top{\pmb{\Phi}^{(g)}}^\top\mathbf{1}_N\mathbf{e}_n^\top\right){\pmb{\Phi}^{(g)}}^\top\mathbf{e}_n\mathbf{e}_n^\top\mathbf{X}^{(g)}{\mathbf{X}^{(g)}}^\top \nonumber\\
		&\quad+2\sum_{g=1}^G h_2^{(g)}\left[{\pmb{\Phi}^{(g)}}^\top\pmb{\Phi}^{(g)}\right]\mathbf{W}\left[\mathbf{X}^{(g)}{\mathbf{X}^{(g)}}^\top\right]
		\label{eq:glkr_deriv_J2}
		\end{align}
		And finally,
		\begin{align}
		\frac{\partial J_3(\mathbf{W})}{\partial\mathbf{W}}=2\mathbf{W}
		\label{eq:glkr_deriv_J3}
		\end{align}
		Then, from \eqref{eq:glkr_deriv_J1}, \eqref{eq:glkr_deriv_J2}, and \eqref{eq:glkr_deriv_J3}, we have that
		\begin{align}
		\frac{\partial J(\mathbf{W})}{\partial\mathbf{W}}&=-2\sum_g {\pmb\Phi^{(g)}}^\top\mathbf{A}^{(g)}+2\sum_g[{\pmb\Phi^{(g)}}^\top\pmb\Phi^{(g)}]\mathbf{W}\nonumber\\
		&\quad+\alpha\sum_{g=1}^G h_1^{(g)}\sum_{n=1}^N{\pmb{\Phi}^{(g)}}^\top\mathbf{e}_n\mathbf{1}_N^\top\tr\left(\mathbf{e}_n\mathbf{e}_n^\top\mathbf{X}^{(g)}{\mathbf{X}^{(g)}}^\top\right)\nonumber\\
		&\quad-\alpha\sum_{g=1}^G h_1^{(g)}{\pmb{\Phi}^{(g)}}^\top\mathbf{X}^{(g)}{\mathbf{X}^{(g)}}^\top\nonumber\\
		&\quad+\alpha\sum_{g=1}^G h_2^{(g)}\sum_{n_1=1}^N\sum_{n_2=1}^N{\pmb{\Phi}^{(g)}}^\top\mathbf{1}_N\mathbf{e}_{n_2}^\top\pmb{\Phi}^{(g)}\mathbf{W}\mathbf{1}_N\mathbf{e}_{n_1}^\top\tr\left(\mathbf{e}_{n_1}\mathbf{e}_{n_1}^\top\mathbf{e}_{n_2}\mathbf{e}_{n_2}^\top\mathbf{X}^{(g)}{\mathbf{X}^{(g)}}^\top\right)\nonumber\\
		&\quad+\alpha\sum_{g=1}^G h_2^{(g)}\sum_{n_1=1}^N\sum_{n_2=1}^N{\pmb{\Phi}^{(g)}}^\top\mathbf{e}_{n_2}\mathbf{1}^\top_N\pmb{\Phi}^{(g)}\mathbf{W}\mathbf{e}_{n_1}\mathbf{1}_N^\top\tr\left(\mathbf{e}_{n_1}\mathbf{e}_{n_1}^\top\mathbf{e}_{n_2}\mathbf{e}_{n_2}^\top\mathbf{X}^{(g)}{\mathbf{X}^{(g)}}^\top\right)\nonumber\\
		&\quad+\alpha\sum_{g=1}^G -2h_2^{(g)}\sum_{n=1}^N{\pmb{\Phi}^{(g)}}^\top\mathbf{1}_N\mathbf{e}_n^\top\tr\left(\mathbf{X}^{(g)}{\mathbf{X}^{(g)}}^\top\mathbf{e}_n\mathbf{e}_n^\top\pmb{\Phi}^{(g)}\mathbf{W}\right)\nonumber\\
		&\quad+\alpha\sum_{g=1}^G -2h_2^{(g)}\sum_{n=1}^N\tr\left(\mathbf{W}^\top{\pmb{\Phi}^{(g)}}^\top\mathbf{1}_N\mathbf{e}_n^\top\right){\pmb{\Phi}^{(g)}}^\top\mathbf{e}_n\mathbf{e}_n^\top\mathbf{X}^{(g)}{\mathbf{X}^{(g)}}^\top \nonumber\\
		&\quad+2\alpha\sum_{g=1}^G h_2^{(g)}\left[{\pmb{\Phi}^{(g)}}^\top\pmb{\Phi}^{(g)}\right]\mathbf{W}\left[\mathbf{X}^{(g)}{\mathbf{X}^{(g)}}^\top\right]
		\nonumber\\
		&\quad+2\beta\mathbf{W}
		\label{eq:glkr_deriv_J}
		\end{align}
		\subsection{Vectorizing the derivatives}
		In our later computations, we would need the vectorized version of these derivatives, which we now compute:
		
		\begin{align}
		\vecc\left(\frac{\partial J_1(\mathbf{W})}{\partial\mathbf{W}}\right)&=-2\vecc\left(\sum_g {\pmb\Phi^{(g)}}^\top\mathbf{A}^{(g)}\right)+2\left[\mathbf{I}_N\otimes \sum_g{\pmb\Phi^{(g)}}^\top\pmb\Phi^{(g)}\right]\vecc\left(\mathbf{W}\right)\nonumber
		\end{align}
		Similarly, we have
		\begin{align}
		\vecc\left(\frac{\partial J_2(\mathbf{W})}{\partial\mathbf{W}}\right)&=\vecc\left(\sum_{g=1}^G h_1^{(g)}{\pmb{\Phi}^{(g)}}^\top\left[\sum_{n=1}^N\mathbf{e}_n\mathbf{1}_N^\top\tr\left(\mathbf{e}_n\mathbf{e}_n^\top\mathbf{X}^{(g)}{\mathbf{X}^{(g)}}^\top\right)-\mathbf{X}^{(g)}{\mathbf{X}^{(g)}}^\top\right]\right)\nonumber\\
		&\quad+\sum_{g=1}^G h_2^{(g)}\sum_{n_1=1}^N\sum_{n_2=1}^N\left[(\mathbf{1}_N\mathbf{e}_{n_1}^\top)\otimes{\pmb{\Phi}^{(g)}}^\top\mathbf{1}_N\mathbf{e}_{n_2}^\top\pmb{\Phi}^{(g)}\right]\tr\left(\mathbf{e}_{n_1}\mathbf{e}_{n_1}^\top\mathbf{e}_{n_2}\mathbf{e}_{n_2}^\top\mathbf{X}^{(g)}{\mathbf{X}^{(g)}}^\top\right)\vecc\mathbf{W}\nonumber\\
		&\quad+\sum_{g=1}^G h_2^{(g)}\sum_{n_1=1}^N\sum_{n_2=1}^N\left[(\mathbf{e}_{n_1}\mathbf{1}_N^\top)\otimes{\pmb{\Phi}^{(g)}}^\top\mathbf{e}_{n_2}\mathbf{1}^\top_N\pmb{\Phi}^{(g)}\right]\tr\left(\mathbf{e}_{n_1}\mathbf{e}_{n_1}^\top\mathbf{e}_{n_2}\mathbf{e}_{n_2}^\top\mathbf{X}^{(g)}{\mathbf{X}^{(g)}}^\top\right)\vecc\mathbf{W}\nonumber\\
		&\quad+\sum_{g=1}^G -2h_2^{(g)}\sum_{n=1}^N\vecc\left({\pmb{\Phi}^{(g)}}^\top\mathbf{1}_N\mathbf{e}_n^\top\right)\vecc\left(\left[\mathbf{X}^{(g)}{\mathbf{X}^{(g)}}^\top\mathbf{e}_n\mathbf{e}_n^\top\pmb{\Phi}^{(g)}\right]^\top\right)^\top\vecc\mathbf{W}\nonumber\\
		&\quad+\sum_{g=1}^G -2h_2^{(g)}\sum_{n=1}^N\vecc\left({\pmb{\Phi}^{(g)}}^\top\mathbf{e}_n^\top\mathbf{e}_n\mathbf{X}^{(g)}{\mathbf{X}^{(g)}}^\top\right)\vecc\left({\pmb{\Phi}^{(g)}}^\top\mathbf{1}_N\mathbf{e}_n^\top\right)^\top\vecc\mathbf{W} \nonumber\\
		&\quad+2\sum_{g=1}^G h_2^{(g)}\left[\mathbf{X}^{(g)}{\mathbf{X}^{(g)}}^\top\right]^\top\otimes\left[{\pmb{\Phi}^{(g)}}^\top\pmb{\Phi}^{(g)}\right]\vecc\mathbf{W}\nonumber
		\end{align}
		Putting together vectorized partial derivatives of $J_1$, $J_2$, and $J_3$, we have that the vectorized partial derivative of $J$ with respect to $\mathbf{W}$ is given by:
		\begin{align}
		\vecc\left(\frac{\partial J(\mathbf{W})}{\partial\mathbf{W}}\right)&=-2\vecc\left(\sum_g {\pmb\Phi^{(g)}}^\top\mathbf{A}^{(g)}\right)+2\left[\mathbf{I}_N\otimes \sum_g{\pmb\Phi^{(g)}}^\top\pmb\Phi^{(g)}\right]\vecc\left(\mathbf{W}\right)\nonumber\\
		&\quad+\alpha\vecc\left(\sum_{g=1}^G h_1^{(g)}{\pmb{\Phi}^{(g)}}^\top\left[\sum_{n=1}^N\mathbf{e}_n\mathbf{1}_N^\top\tr\left(\mathbf{e}_n\mathbf{e}_n^\top\mathbf{X}^{(g)}{\mathbf{X}^{(g)}}^\top\right)-\mathbf{X}^{(g)}{\mathbf{X}^{(g)}}^\top\right]\right)\nonumber\\
		&\quad+\alpha\sum_{g=1}^G h_2^{(g)}\sum_{n_1=1}^N\sum_{n_2=1}^N\left[(\mathbf{1}_N\mathbf{e}_{n_1}^\top)\otimes{\pmb{\Phi}^{(g)}}^\top\mathbf{1}_N\mathbf{e}_{n_2}^\top\pmb{\Phi}^{(g)}\right]\tr\left(\mathbf{e}_{n_1}\mathbf{e}_{n_1}^\top\mathbf{e}_{n_2}\mathbf{e}_{n_2}^\top\mathbf{X}^{(g)}{\mathbf{X}^{(g)}}^\top\right)\vecc\mathbf{W}\nonumber\\
		&\quad+\alpha\sum_{g=1}^G h_2^{(g)}\sum_{n_1=1}^N\sum_{n_2=1}^N\left[(\mathbf{e}_{n_1}\mathbf{1}_N^\top)\otimes{\pmb{\Phi}^{(g)}}^\top\mathbf{e}_{n_2}\mathbf{1}^\top_N\pmb{\Phi}^{(g)}\right]\tr\left(\mathbf{e}_{n_1}\mathbf{e}_{n_1}^\top\mathbf{e}_{n_2}\mathbf{e}_{n_2}^\top\mathbf{X}^{(g)}{\mathbf{X}^{(g)}}^\top\right)\vecc\mathbf{W}\nonumber\\
		&\quad+\alpha\sum_{g=1}^G -2h_2^{(g)}\sum_{n=1}^N\vecc\left({\pmb{\Phi}^{(g)}}^\top\mathbf{1}_N\mathbf{e}_n^\top\right)\vecc\left(\left[\mathbf{X}^{(g)}{\mathbf{X}^{(g)}}^\top\mathbf{e}_n\mathbf{e}_n^\top\pmb{\Phi}^{(g)}\right]^\top\right)^\top\vecc\mathbf{W}\nonumber\\
		&\quad+\alpha\sum_{g=1}^G -2h_2^{(g)}\sum_{n=1}^N\vecc\left({\pmb{\Phi}^{(g)}}^\top\mathbf{e}_n\mathbf{e}_n^\top\mathbf{X}^{(g)}{\mathbf{X}^{(g)}}^\top\right)\vecc\left({\pmb{\Phi}^{(g)}}^\top\mathbf{1}_N\mathbf{e}_n^\top\right)^\top\vecc\mathbf{W} \nonumber\\
		&\quad+2\alpha\sum_{g=1}^G h_2^{(g)}\left[\mathbf{X}^{(g)}{\mathbf{X}^{(g)}}^\top\right]^\top\otimes\left[{\pmb{\Phi}^{(g)}}^\top\pmb{\Phi}^{(g)}\right]\vecc\mathbf{W}\nonumber\\
		&\quad+\beta\vecc\mathbf{W}\nonumber\\
		&=\mathbf{F}\vecc\mathbf{W}-\mathbf{g}
		\end{align}
		The derivative of $J(\mathbf{W})$ with respect to $k$ th component of $\mathbf{w}$  denoted by $\mathbf{w}(k)$ is then given by
		\begin{align}
		\frac{\partial J(\mathbf{W})}{\partial{\mathbf{w}(k)}}=\tr\left(\left[\frac{\partial J(\mathbf{W})}{\partial\mathbf{W}}\right]^\top\frac{\partial \mathbf{W}}{\partial{\mathbf{w}(k)}}\right)=\left(\vecc\frac{\partial J(\mathbf{W})}{\partial\mathbf{W}}\right)^\top\vecc\frac{\partial \mathbf{W}}{\partial{\mathbf{w}(k)}}\nonumber
		\end{align}
		Now, we have that
		\begin{align}
		\frac{\partial \mathbf{W}}{\partial{\mathbf{w}(k)}}=\frac{\partial (\mathbf{I}_N\otimes\mathbf{w})}{\partial{\mathbf{w}(k)}}=\mathbf{I}_N\otimes\frac{\partial \mathbf{w}}{\partial{\mathbf{w}(k)}}=\mathbf{I}_N\otimes\mathbf{e}_k\nonumber
		\end{align}
		Let $\mathbf{\rho}_k\in\mathbb{R}^{N^2K}$ denote $\vecc\frac{\partial \mathbf{W}}{\partial{\mathbf{w}(k)}}$. Then, we have that
		\begin{align}
		\mathbf{\rho}_k(i)=\begin{cases}
		1, & \mbox{if}\,\, i=m((N+1)K)+k, \,\, m=0,1,\cdots,N-1,\\0,&\,\, \mbox{otherwise}\nonumber
		\end{cases}
		\end{align}
		Then, the gradient of $J(\mathbf{W})$ with respect to $\mathbf{w}$ may be written as
		\begin{align}
		\bar{\pmb\rho}^\top\vecc\frac{\partial J(\mathbf{W})}{\partial\mathbf{W}}=\mathbf{0}\nonumber
		\end{align}
		where $\bar{\pmb\rho}=[\pmb\rho_1\,\pmb\rho_2\,\cdots\pmb\rho_K]\in\mathbb{R}^{N^2K\times K}$.
		Then, from above we have that
		\begin{align}
		\frac{\partial J(\mathbf{W})}{\partial\mathbf{w}}&=\bar{\pmb\rho}^\top\vecc\frac{\partial J(\mathbf{W})}{\partial\mathbf{W}}=\bar{\pmb\rho}^\top\left(\mathbf{F}\vecc\mathbf{W}+\mathbf{g}\right)=\mathbf{0}\nonumber
		\end{align}
		Since $\vecc\mathbf{W}=[\mathbf{w}^\top \underbrace{0\cdots0}_{NK}\mathbf{w}^\top\underbrace{0\cdots0}_{NK}\cdots\mathbf{w}^\top]^\top$, the values of $\mathbf{w}$ depend only on those columns of $\bar{\mathbf{F}}$ and $\bar{\mathbf{g}}$ which correspond to the $N\times K$ nonzero components of $\vecc\mathbf{W}$. 
		The component sets of $\vecc\mathbf{W}$ where $\mathbf{w}$ is present are the following:
		\begin{align}
		&\Omega_1\triangleq[1:K] \mbox{(first K corresponding to the first occurence of $\mathbf{w}$)}\nonumber\\
		&\Omega_2\triangleq[(N+1)K+1:(N+1)K+K]\nonumber\\
		&\Omega_3\triangleq[2(N+1)K+1:2(N+1)K+K]\nonumber\\
		&\Omega_4\triangleq[3(N+1)K+1:3(N+1)K+K]\nonumber\\
		\vdots\nonumber\\
		&\Omega_N\triangleq\underbrace{[(N-1)(N+1)K+1:(N-1)(N+1)K+K]}_{\equiv\displaystyle [N^2K-K+1:N^2K]  \mbox{(the last K corresponding to the last occurence of $\mathbf{w}$)}}\nonumber
		\end{align}

	\end{proof}
	\begin{figure*}[t]
		\centering
		\subfigure[]{\includegraphics[width=2.2in]{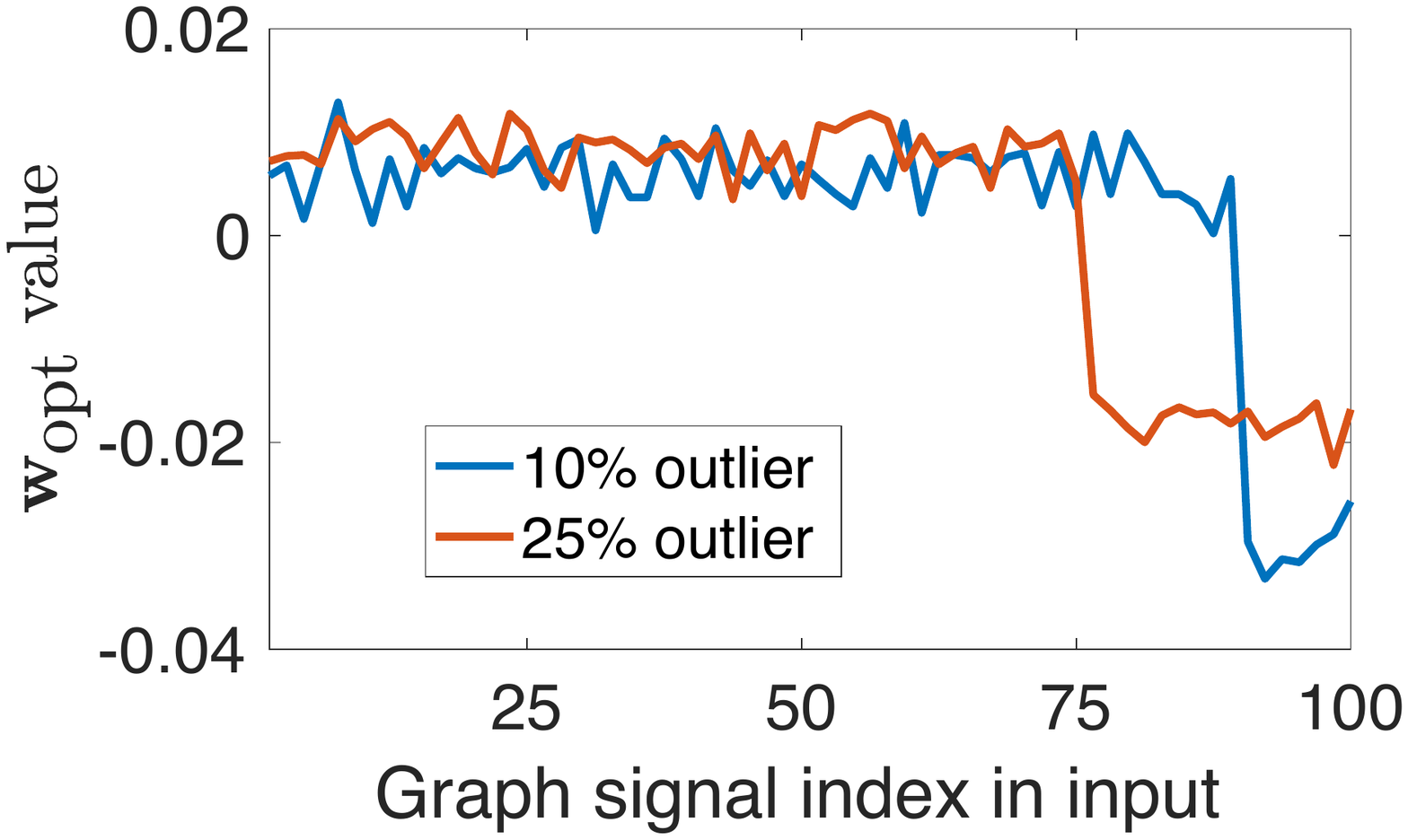}}
		\subfigure[]{\includegraphics[width=2.2in]{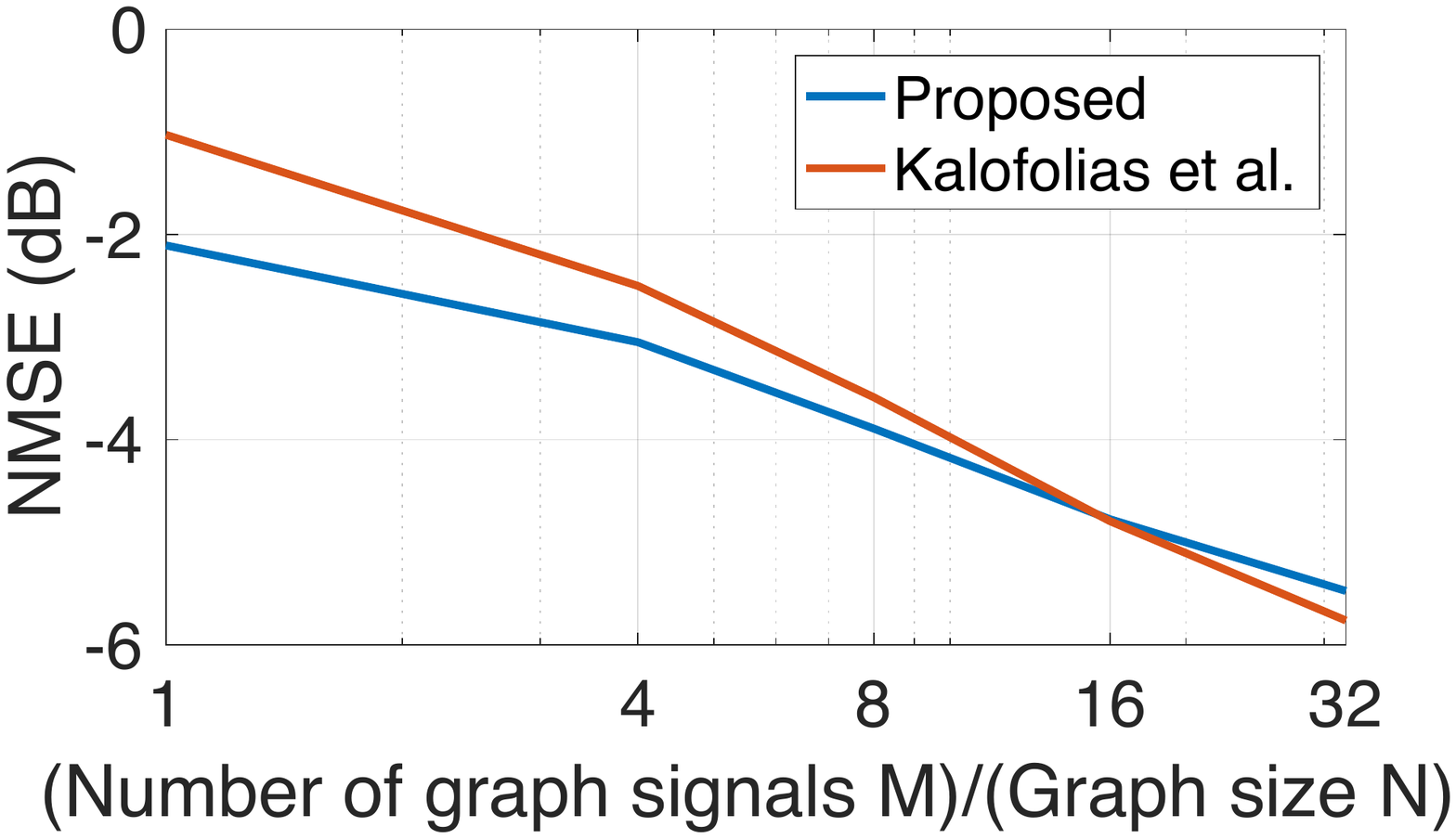}}
		\subfigure[]{\includegraphics[width=2.2in]{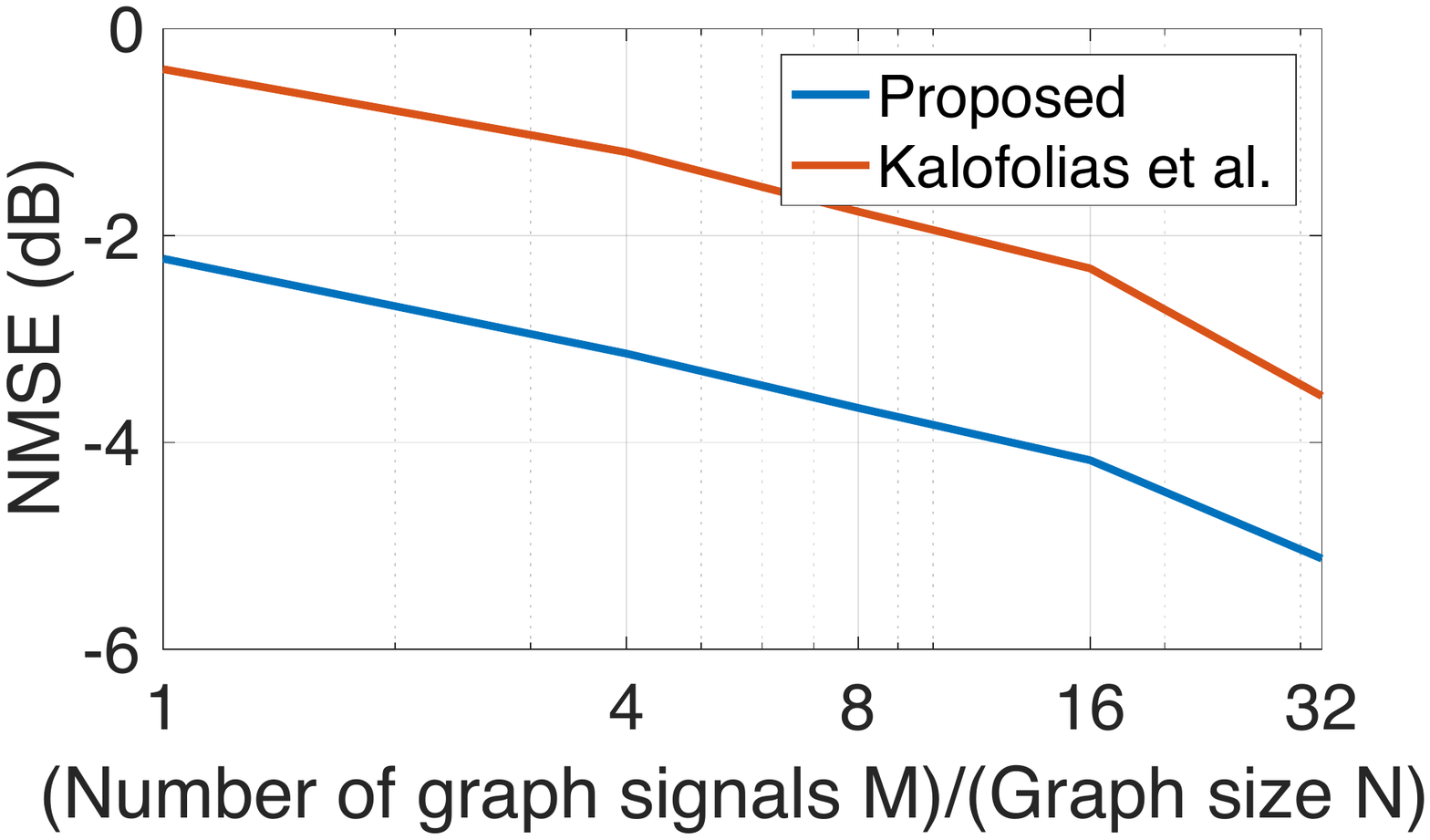}}
		\caption{Results from synthesized graph signal data with outliers (a) Plot showing $\mathbf{w}$ for sparse graph of size $N=6$ for $M=10N=100$ graph signals, (b) NMSE of both approaches with $10\%$ outliers, and (c) with $25\%$ outliers. }
		\label{fig:synth}
	\end{figure*}
	Then, from Proposition 1, the optimal coefficients $\mathbf{w}$ are obtained by solving:
	\begin{align}
	\bar{\mathbf{F}}\mathbf{w}_{\mbox{opt}}=\bar{\pmb\rho}^\top{\mathbf{g}}\nonumber
	\end{align}
	where $
	\bar{\mathbf{F}}\triangleq\left[
	\mathcal{C}_{\Omega_1}(\bar{\pmb\rho}^\top\mathbf{F})+\mathcal{C}_{\Omega_2}(\bar{\pmb\rho}^\top\mathbf{F})+\cdots+
	\mathcal{C}_{\Omega_N}(\bar{\pmb\rho}^\top\mathbf{F})
	\right].$
	Finally, we have that 
	\begin{align*}
	\mathbf{w}_{\mbox{opt}}=\bar{\mathbf{F}}^\dagger\bar{\pmb\rho}^\top{\mathbf{g}},\nonumber
	\end{align*}
	where $\dagger$ denotes the pseudo-inverse operation.
	
	Once the regression coefficients are obtained, the prediction of the edge weight between two nodes $i$ and $j$ for a possibly new graph $g_{new}$, given their corresponding $M$ graph signal values as input, is given by
	\begin{align}
{a_{i,j}^{(g_{new})}=\mathbf{w}_{\mbox{opt}}^\top\pmb\phi\left(\mathbf{x}^{(g_{new})}(i),\mathbf{x}^{(g_{new})}(j)\right)}.
	\label{eq:a_pred}
	\end{align}
	{\color{black}We also note here that since we model the edge-weights separately as Eq. \eqref{eq:glkr_main}, we do not make any assumptions regarding the size of the graphs in the test data, they could be of different number of nodes than that of the training data graphs. That is, the size of the graphs may be different in training and test datasets. Our approach remains applicable as long as the spectral profile of the graph signals of all graphs is assumed to be the same, as discussed earlier.}
	\section{Experimental results}
	
	We first consider the application of our approach to synthesized graph datasets. Our goal is to learn the optimal regression coefficients from training graphs to make predictions for the adjacency matrices of test graphs. We consider the case when the input observations are smooth graph signals but a fraction of them are outlier signals which are high-frequency graph signals. We assume the outliers to occur at the same observation indices among the $M$ graph signals in the input, in all the training and test graphs. Such an example simulates the case where the graph signals represent features which may be not equally relevant to graph learning process. We then use $\mathbf{w}_{\mbox{opt}}$ to make predictions for the graphs in the test set using Eq. \eqref{eq:a_est}. Our expectation is that values of $\mathbf{w}_{\mbox{opt}}$ will exhibit different trends for the smooth graph signals and the outliers signals.
	
	We consider $32\,(G=16)$ graphs of size $N=10$, out of which $16$ are used for training and the remaining for testing. We first construct a sparse connected graph $\mathbf{A}_0$ with only 40\% of the total number of non-diagonal entries being non-zero with values drawn from uniform random distribution over $[0,1]$. The training and test graphs $\mathbf{A}^{(g)}$ are then generated by randomly inserting values at 10\% of non-diagonal entries of $\mathbf{A}_0$, with values again drawn from the uniform random distribution over $[0,1]$. The adjacency matrices are scaled so that they all have unit Frobenius norm. For each of these graphs, we generate smooth graph signals by drawing realizations from zero-mean Gaussian with covariance matrix $[\mathbf{L}^{(g)}]^\dagger$. The outlier signals are generated from zero-mean multivariate Gaussian distribution with covariance matrix $[\mathbf{L}^{(g)}]^2$, which correspond to high-frequency graph signals(following discussion in Section \ref{sec:gsp}). We consider two cases: outliers being $10\%$ and $25\%$ of the $M$ input signals for different values of $M$. We use $\alpha=0.1/ M$ and $\beta=10/M$ which are values set by crossvalidation. We measure the performance of our approach in terms of the normalized mean square error  (NMSE) of the prediction for test data defined as \begin{align}
	\mbox{NMSE}=\frac{\mathbb{E}(\|\mathbf{A}-\hat{\mathbf{A}}\|_F^2)}
	{\mathbb{E}(\|\mathbf{A}\|_F^2)}\nonumber
	\end{align}
	where $\mathbf{A}$ and $\hat{\mathbf{A}}$ denote the true and estimated adjacency matrices, respectively; and the expectation is taken over all testing matrices for 100 Monte Carlo runs. In Figure \ref{fig:synth}(a), we show an instance of $\mathbf{w}$ for $M=10$, averaged over 100 Monte Carlo realizations for both $10\%$ and $25\%$ outlier cases. We observe that the computed regression coefficients clearly shows a difference in trend between smooth graphs signals and outliers. This is because we impose the resulting model to minimize the graph smoothness regularization. 
	In Figure \ref{fig:synth}(c), we plot the NMSE for test prediction at different values of input observation size $M$.
	{\color{black}Since no prior supervised learning approaches exist for the problem, we restrict ourselves to the comparison of our method with that of Kalofolias et al \cite{kalofolias2016learn}. The approach is a popular approach used in learning graphs from smooth graph signals. The hyperparameters of the method are set by choosing the values which minimize the training NMSE. In the case of both approaches, we threshold the estimated matrices in order to obtain sparse matrices.}
	
	We notice that, with $10 \%$ outliers in the input, our approach outperforms graph-learning when $M$ is small, but nearly coincides with the latter at larger $M$ values. When the fraction of outliers in the input is increased to $25\%$, we observe that the other approach performs poorly in comparison with our approach. Further, our approach results in an NMSE which is comparable with that obtained in the $10\%$ outlier case. Such a consistency in prediction NMSE of our approach may be attributed to the regression model which allows for differential treatment of the various input graph signals. The classical graph-learning on the other hand explicitly assumes all the signals to be smooth over the graph that is being inferred. 
	We note that in both $10\%$ and $25\%$ outlier cases, the NMSE of both approaches decreases as $M$ increases. In Table \ref{tab:F}, we list the F values of the obtained matrices for test data from both methods. We observe that the F scores of both approaches increase with $M$ and are close to each other.
	
	We further note here that we also performed the experiment with Erdos-Renyi graphs of size $N=10$, and observed similar trends in NMSE both with $M$ and in comparison with $M$. However, we omit the corresponding NMSE plots here for brevity.
	\begin{table}[t]
		\centering
		\begin{tabular}{|c|c|c|c|c|c}
			\hline
			$M/N$&Proposed&GL&Proposed&GL\\
			&10\%&10\%&25\%&25\%\\
			\hline
			1 & 0.71 & 0.66 & 0.71 & 0.62\\
			4& 0.72 & 0.74 & 0.74 & 0.69\\
			8 & 0.74 & 0.78 & 0.74 & 0.72\\ 
			16 & 0.74 & 0.82 & 0.73 & 0.76\\ 
			32 & 0.79 & 0.85 & 0.8 & 0.84 \\
			\hline
		\end{tabular}
		\caption{F values for estimated graphs for the two approaches}
		\label{tab:F}
	\end{table}
	\section{Conclusions}
	We proposed a supervised learning approach based on linear regression for predicting graphs from graph signals based. We formulated a convex optimization problem to solve for the regression coefficients. Our approach was shown to result in a good prediction performance when not all the graph signals may be equally relevant or may have noise/ corruptions. This is because the linear regression model allows for the different graph signals be weighed differently. The learnt regression coefficients were seen to reflect the presence of outliers in the graph signals. In future work, we will pursue the application of our approach to real-world datasets such as weather measurements, and Yelp, and functional magnetic resonance imaging data.
	
	\bibliographystyle{IEEEtran}
	\bibliography{bibliography.bib}

\end{document}